\documentclass{article}
\usepackage{authblk}
\usepackage[T1]{fontenc}
\usepackage{graphicx}
%this package is flexible for image insertion

\usepackage{alltt}
%this package is suitable for the description of algorithms and computer programs

\usepackage{amssymb}
\usepackage{amsmath}
%this package draws mathematical symbols smoothly
\usepackage{theorem}

\theoremstyle{plain}
\newtheorem{cor}{Corollary}
\newtheorem{thm}{Theorem}
\newtheorem{lem}{Lema}

\theoremstyle{remark}
\newtheorem{Remark}{Remark}

\theoremstyle{definition}
\newtheorem{defi}{Definition}
\newtheorem{proof}{Proof}
\newtheorem{ex}{Example}

\newcommand{\N}{\mathbb{N}}

\title{Compound orbits break-up in constituents: an algorithm }

\author[1,2]{Jesús San Martín\thanks{jsm@dfmf.uned.es}}
\author[3]{A. González Gómez\thanks{antonia.gonzalez@upm.es}}
\author[1]{Mª José Moscoso \thanks{mariajose.moscoso@upm.es}}
\author[2]{Daniel Rodríguez-Pérez \thanks{daniel@dfmf.uned.es}}

\affil[1]{Departamento de Matemática Aplicada, E.U.I.T.I. Universidad
Politécnica de Madrid. 28012-Madrid, SPAIN}
\affil[2]{Departamento de Física Matemática y de Fluidos, Facultad
de Ciencias. Universidad Nacional de Educación a Distancia. 28040-Madrid,
SPAIN.}
\affil[3]{Departamento de Matemática Aplicada a los Recursos Naturales,
E.T. Superior de Ingenieros de Montes. Universidad Politécnica de
Madrid. 28040-Madrid, SPAIN.}

\begin{document}

\maketitle

\begin{abstract}
In this paper decomposition of periodic orbits in bifurcation diagrams are derived in unidimensional dynamics system $x_{n+1}=f(x_{n};r)$, being $f$ an unimodal function. We proof a theorem which states the necessary and sufficient conditions for the break-up of compound orbits in their simpler constituents. A corollary to this theorem provides an algorithm for the computation of those orbits. This process closes the theoretical framework initiated in (Physica D, \textbf{239}:1135--1146, 2010).
\vskip 2mm

\textbf{Keywords:} Visiting order permutation; Next visiting permutation; Decomposition theorem.

\end{abstract}

\section{Introduction}

 Dynamical systems underlie in any Science we can imagine, from mathematical to social Sciences. Countless mathematical models have been developed to describe temporal evolution of the world around us: planets orbiting the Sun, flow of water in a river, people waving in a stadium, cells forming tissues in our body, cars moving along a road, etc. As a consequence of the extraordinary variety of phenomena studied, there exists a huge number of possible behaviors. One way to put so much complexity in order is by using symbolic dynamics \cite{Hao}.  In that case, the dynamical systems are modelled in a discrete space, resulting of breaking down the original dynamical system into finitely many pieces. Every piece is labelled by a symbol. System evolution is given by a sequence of symbols, each of them representing a piece of the system. The complexity is reduced because system evolution takes a finite number of states in contrast to infinity countless original states.  Although one might think  that no crucial information about the system may be obtained by this process there are some groundbreaking results in this subject. Special attention should be given to pioneering works by  Metropolis Stein \cite{metro} about symbolic sequences and by Milnor and Thurston \cite{Mil} who developed the kneading theory.
In particular, kneading theory is easier understood when the dynamical system
\begin{equation}\label{intro}
   x(n+1)=f(x(n))
\end{equation}
is ruled by an unimodal function, the function we will work with in this paper.

Some tools we will need to harness the powerful of symbolic dynamic are the periodic orbits. They have a periodic symbolic sequence and play an important role in dynamical systems, in particular the unstable ones as we will see later.

The composition law of Derrida, Gervois and Pomeau \cite{De}  allows to generate a symbolic sequence of complex structure from its constituents (periodic orbits). In particular, starting from the symbolic sequence of the supercycle of period one it is possible to build up symbolic sequences of Feigenbaum cascade orbits \cite{ref_1,ref_2}. So, one of the most important ways of transition to chaos is characterized. But not only that, by using saddle-node bifurcation cascades \cite{jes} and symbolic sequences of Feigenbaum cascade orbits, the structure of chaotic bands of the bifurcation diagram is also characterized (see figure \ref{figura1}).
Working with a unimodal function, the symbolic sequence is obtained as follows: the critical point of unimodal function is denoted by $\mathrm{C}$,  points located to the  right of  $\mathrm{C}$ are denoted $\mathrm{R}$ (right) and the ones located to its left as $\mathrm{L}$ (left). Obviously this process generates a problem. We can not distinguish an $\mathrm{R}$  $(\mathrm{L})$ from another, we do not know relative positions of points associated with them. Given the symbolic sequence  $\mathrm{CRLRRRLR}$, is the point of the first  $\mathrm{R}$ to right or to the left of the point associated with the second  $\mathrm{R}$? The flaw can be solved by associating orbits with permutations instead of with symbolic sequences, because the orbit points are labelled by numbers; those numbers give the relative point positions.  There are permutations giving the visiting order in Feigenbaum cascade orbits \cite{Mar} and there exists a composition law of permutations \cite{Mar1} replacing the composition law of Derrida, Gervois and Pomeau. Consequently, the characterization of bifurcation diagram structure is given by permutations and furthermore we have more specific information about point locations that provided by symbolic sequences.
We have just outlined how to build up the bifurcation diagram from its constituents.
From a mathematical point of view, however, it would be interesting to solve the inverse problem: what are the constituents of a complex structure?  More specifically, given a structure we would like to answer two questions:

\begin{enumerate}
\item[i)] Can we break down the structure? That is, is the structure made up of smaller constituents?
\item[ii)] If the answer to the first question is in the affirmative, how can we break down the structure and what are its constituents?
\end{enumerate}

Said otherwise, we are looking for necessary and sufficient conditions so that a structure can be decomposed into its constituents. That is the goal of this paper.
\begin{figure}
\begin{center}
\includegraphics[width=0.8\textwidth]{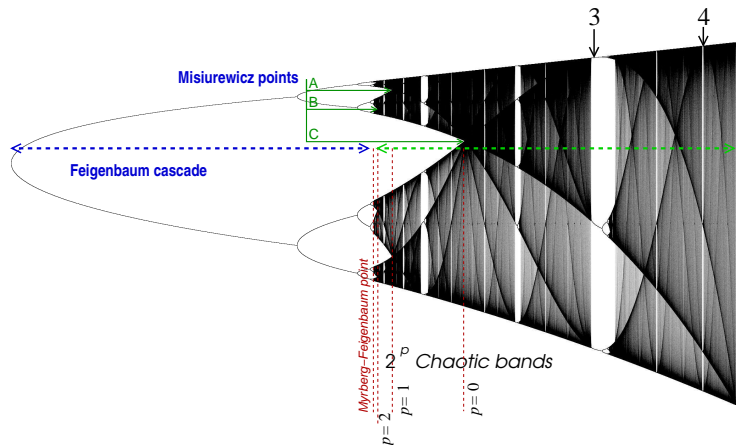}
\caption{\label{figura1} Canonical bifurcation diagram. $3$ and $4-$periodic windows are marked. Some Misiurewicz points (A, B, C) where chaotic bands merge are shown.}
\end{center}

\end{figure}

Solving the inverse problem of composition is already interesting because we complete the composition-decomposition problem. But the most important consequence is that decomposition process is not limited to stable orbits, unstable orbits will also be obtained in such a process. The knowledge of unstable orbits (limit cycles) is fundamental because they are the underlying skeleton of chaotic attractors \cite{Au,Cv}. The shorter the cycles, the better the approximation to the strange attractor \cite{Caroll}, that is why it is interesting to split large cycles into smaller constituents. On the other side, the unstable orbits in the skeleton are the corner-stones of  many chaos control techniques \cite{ott,py}. To implement theses techniques the unstable orbits need to be determined beforehand.

Orbits decomposition  can also be applied to continuous dynamical systems. They can be cast as discrete dynamical systems by using Poincaré section.  Points of Poincaré section corresponding to a continuous orbit lay out a periodic orbit in a discrete space.  If that orbit can be decomposed then the continuous orbit is a composed orbit.  Decomposition of theses orbits is crucial to calculate Gutzwiller trace formula \cite{gutz}, which relates spectra of quantum system with periodic orbits of the equivalent semiclassical system.  Roughly speaking, decomposition law of periodic trajectories will be useful every time cycle expansion techniques \cite{artu}  are used.

Decomposition law is also important from a practical or experimental point of view.  For example, if we have a $12-$periodic orbit we can be interested in knowing if the orbit is located in a primary period $12$ window  or in a period $3$ window  inside in a period $4$ window (see figure \ref{figura2}). 

\begin{figure}
\begin{center}
\includegraphics[width=0.55\textwidth]{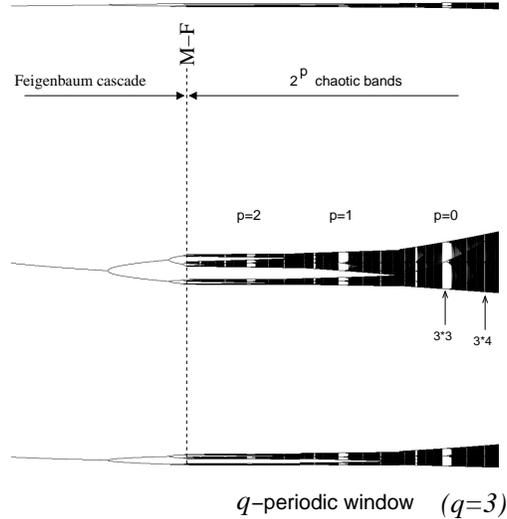}\\
\caption{ \label{figura2} Highlight of the ${3}-$periodic  window of figure \ref{figura1}.  This window mimics the canonical bifurcation diagram but repeated  three times. The ${3}\cdot 4-$periodic window is marked. }
\end{center}
\end{figure}

There exists a very intuitive way of looking at the decomposition process. If we have a period $hs$ orbit, that is with $hs$ points, we can imagine that every point is a chair in a room. The chairs are visited according to a permutation $\beta_s$. We split the $hs$ chairs into $h$ rooms with $s$ chairs each. We visit the rooms according with one permutation $\beta_h$ and every time we visit the same room we sit down in a different chair of the room according to another permutation $\beta_s$. We must find   $\beta_h$ and  $\beta_s$ from   $\beta_{hs}$. We are going to solve this task by using of couple of tricks. If we leave only one chair in every room the result would be like an $h$-periodic orbit such that a point is located at the critical point $\mathrm{C}$ of the unimodal function $f$ of (\ref{intro}) and the rest of points are located where $f$ is either increasing or decreasing. The chairs of a room located where $f$ is increasing (decreasing) are mapped into the next room preserving their relative location (flipped from right to left). So, we split the  $\beta_{hs}$ permutation into $h$ rooms of $s$ elements each, in such a way that images of these sets (except one of them) are either preserved or flipped from right to left. The set whose elements are neither preserved nor flipped will be  $\beta_s$, because they are the chairs of the room associated with the critical point  $\mathrm{C}$.

This papers is organized as follows. Firstly,  definitions and notations  are introduced. Second,   we prove decomposition theorem   to solve the mentioned problems. Then we develop an algorithmic to implement the theorem. We finish with our conclusions and discussions.  We will also  show some examples  to highlight how the  theorems and algorithms work.

\section{Definitions and notation}

Let $f:\mathrm{I}\to\mathrm{I}$ be an unimodal map with critical point at $\mathrm{C}$, that is $f$ is continuous  and strictly increasing (decreasing) on $[a,\mathrm{C})=\mathrm{J_{L}}$ and  strictly decreasing(increasing) on $(\mathrm{C}, b]=\mathrm{J_{R}}$. Without  loss of generality it can be assumed the critical point $\mathrm{C}$ is a maximum (see figure \ref{figura5}). So $f$ is decreasing in $\mathrm{J_{R}}$ and increasing in $\mathrm{J_{L}}$. Let  $O_{q}=\{x_{1}, \ldots, x_{q}\}=\{C, f(C), \ldots, f^{q}(C)\}$  be a $q-$periodic supercycle of $f$ and let  $\{C_{(1,q)}^{*},\, C_{(2,q)}^{*},\ldots,C_{(q,q)}^{*} \}$  be the set that denotes the descending cardinality ordering of the orbit $O_q$ \cite{Mar}. Let $f(C_{(i,q)}^*)$  be the next to $C_{(i,q)}^*$ (see \cite{Mar1}).

\begin{defi}\label{d1}   The natural number $\beta{(i,q)},\; i=1,...,q$ will denote the ordinal position of the cardinal point $f(C_{(i,q)}),\; i=1,...,q$ . That is $f(C_{(i,q)})= C_{(\beta{(i,q)},q)},\; i=1,...,q$ (see figure \ref{fighs}).  \end{defi}

\begin{Remark}If   $c$   denotes the ordinal position of the critical point $C$ of $f$, as  $f(C)$ is in the first  position  (see \cite{Mar} remark 1),  it results that  $\beta(c,q)=1$.
\end{Remark}

\begin{defi} We denote as $\beta_q$ the permutation $\beta_{q}=(\beta(1,q)\, \beta(2,q) \,\ldots  \beta(q,q))$. $\beta_q$ will be called the next visiting permutation of  $O_{q}$ (see  figure \ref{fighs}).
\end{defi}
\begin{Remark} If  the visiting order permutation is such that   $f(C_{(i,q)})=C_{(j,q)},$ that is  $C_{(i,q)}\rightarrow C_{(j,q)}$, we write
\[\left(
    \begin{array}{ccc}
      \cdots  & i & \cdots \\
      \cdots  & j & \cdots \\
         \end{array}
  \right)
\]
   then we reorder  the pairs $\left(\begin{array}{c} i \\ j \\ \end{array} \right)$ in such a way  that the index  $'i'$ has the natural order. For example, let   $O_4$ be a $4-$periodic orbit (see figure \ref{fighs})  with  visiting order  permutation \[   1\rightarrow \,  4\rightarrow \, 3\rightarrow \, 2 \] so we write
 \[
   \left(
     \begin{array}{cccc}
       1 & 4 & 3 & 2 \\
       \downarrow & \downarrow  & \downarrow & \downarrow \\
       4 & 3 & 2 & 1 \\
     \end{array}
   \right) \ \Longrightarrow \left(
     \begin{array}{cccc}
       1 & 2 & 3 & 4 \\
       \downarrow & \downarrow  & \downarrow & \downarrow \\
       4 & 1 & 2 & 3 \\
     \end{array}
   \right)\]
   after reordering we obtain the next  visiting permutations

   \[\beta_4=\left(
     \begin{array}{cccc}
              4 & 1 & 2 & 3 \\
     \end{array}
   \right)
   \]
\end{Remark}

\begin{figure}
\begin{center}
\includegraphics[width=0.86\textwidth]{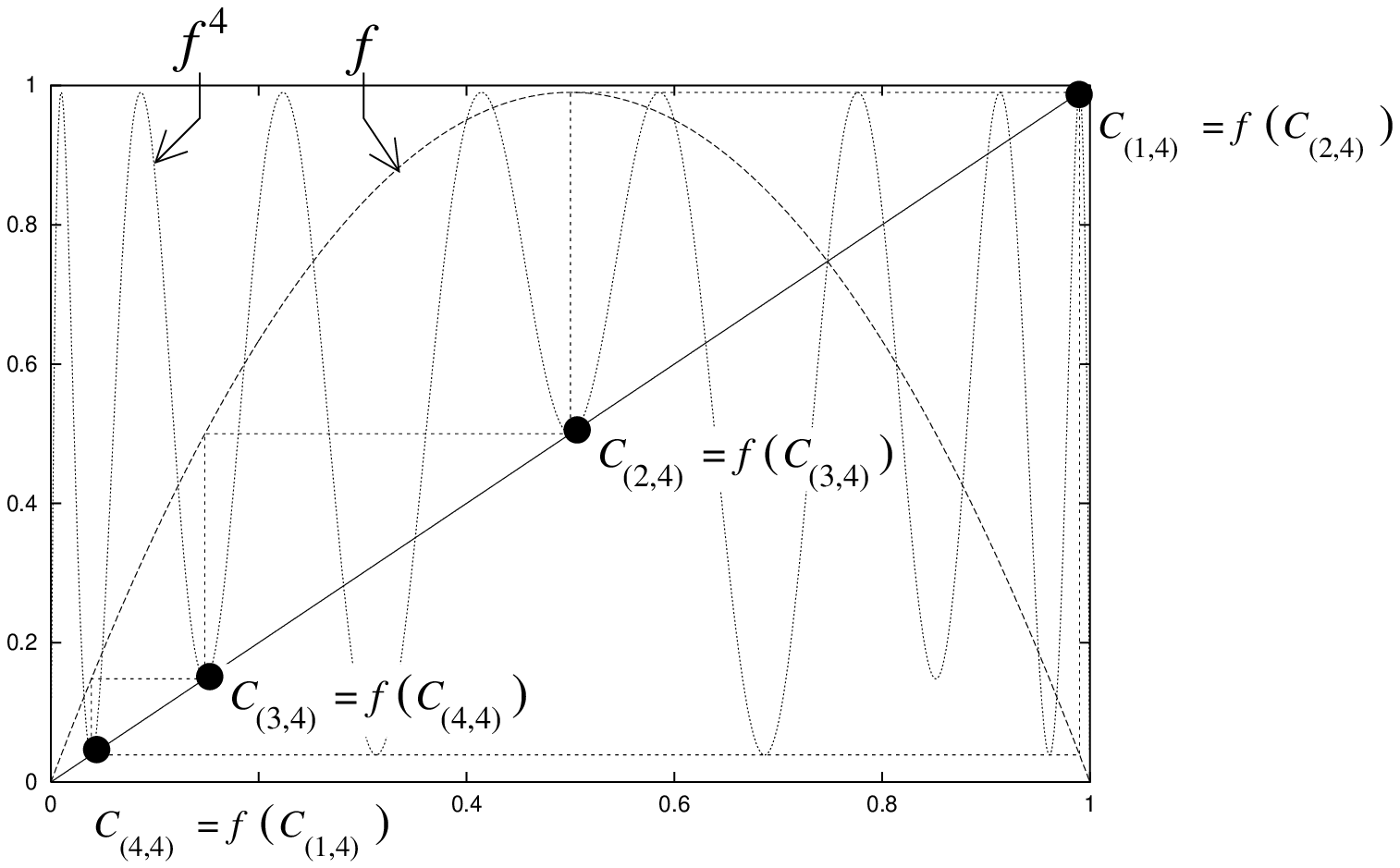}
\caption{\label{fighs} %
  \(\protect \begin{array}{cc} f(C_{(1,4)})=C_{(4,4)} &  f(C_{(2,4)})=C_{(1,4)} \protect \\ f(C_{(3,4)})=C_{(2,4)} &   f(C_{(4,4)})=C_{(3,4)}  \protect \end{array} \Longleftrightarrow \protect \protect\beta_{4}=\protect(4 \, 1 \, 2 \, 3 \protect)\)}
\end{center}
\end{figure}

\begin{defi}\label{d32} Let  $\beta_q$  be the next visiting permutation of  $O_q$ and let $q=hs$. We define the   $j-box$   of   $O_q$ by
$H_j=\{(j-1)s+k; \, k=1,\ldots, s\}$ for $j=1, \ldots, h$. We denote by $\beta_q(H_j)$ the set given by
\[\beta_q(H_j)=\{\beta((j-1)s+k,q): \,  k=1,\ldots, s \}\]
\end{defi}

\begin{defi}\label{d3} Let  $\beta_q$  be the next visiting permutation of  $O_q$ and let $q=hs$. We denote  by  $(\beta^j_q)$   with  $j=1,\ldots, h$
\[\beta^j_q=\left(\begin{array}{ccc} (j-1)s+1 & \, \ldots  \, & (j-1)s+s\\ \beta((j-1)s+1,q) & \ldots  \,&  \beta((j-1)s+s,q)  \\                                                                                                                            \end{array}   \right)\]
and  $\beta^j(r,q)=\beta((j-1)s+r,q)$  with $r=1, \ldots, s$ (see figure \ref{figura5}).

\end{defi}

\begin{figure}
\begin{center}
\includegraphics[width=0.6\textwidth]{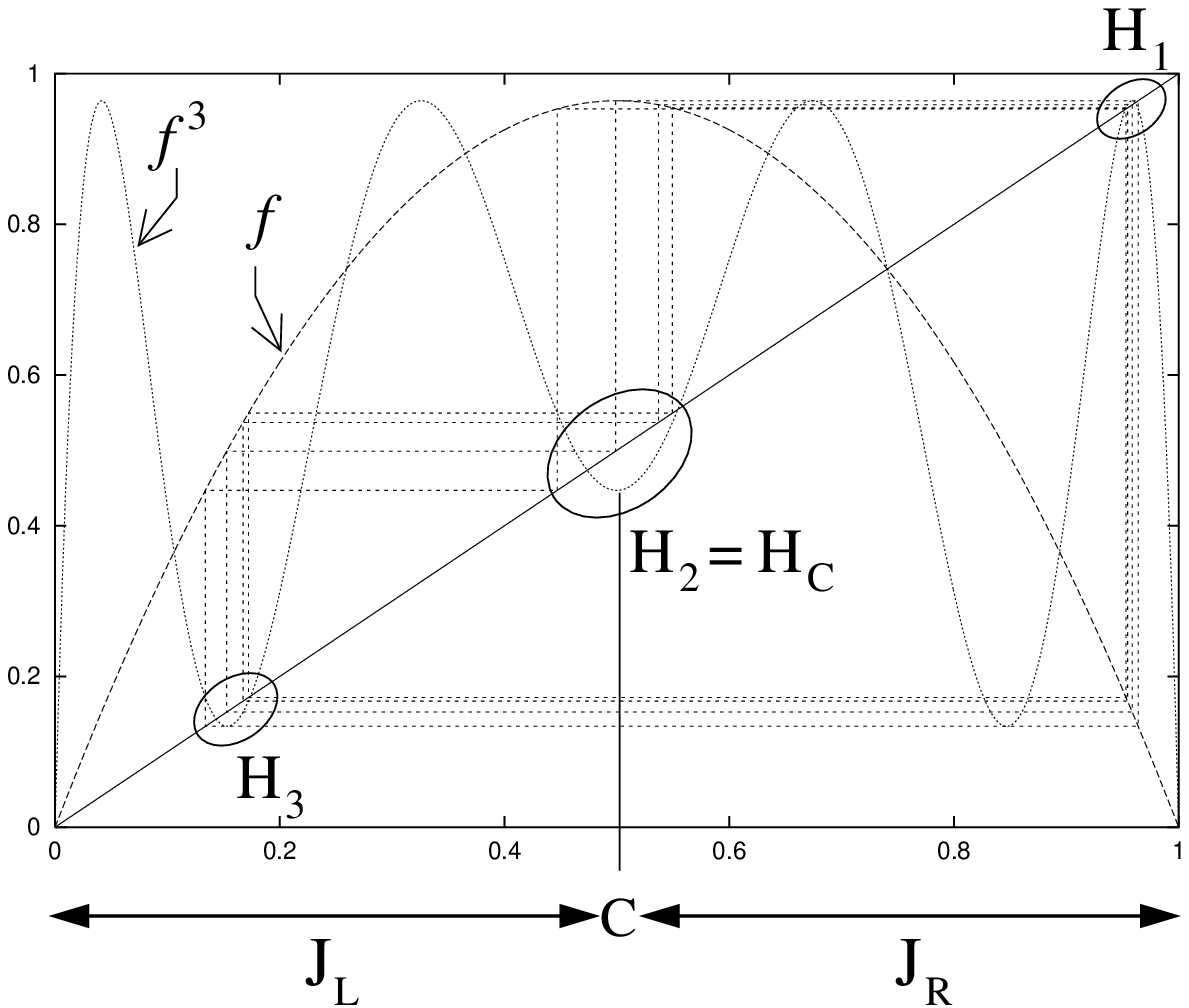}\\
\caption{ \label{figura5} \(\protect  \beta_{12} = \left( \protect\begin{array}{cccccccccccc}12 & 11 & 10 & 9 & 3 & 2&1&4&5&6&7&8  \protect\end{array}  \right) \) is the next visiting permutation of $O_{12}$. If we consider $h=3$ and $s=4$
then \protect\\ \(\protect \beta_{12}^2=\left( \protect\begin{array}{cccc} 5 & 6 & 7&8  \protect\\   3 & 2& 1 & 4\protect\end{array} \right) \protect\)\protect\begin{tabular}{c}
 $\longleftrightarrow \protect \quad  \mathrm{H}_{2}$ \protect\\
$\longleftrightarrow  \protect \quad \mathrm{H}_{1}$
\protect\end{tabular}}
\end{center}
\end{figure}

\begin{defi}\label{def2} Let   $\gamma_{n}$ be a permutation of $n$ elements.  We define the inversion permutation of $\gamma_{n}$, denoted by    $\gamma^*_{n}$, as the permutation given by $\gamma^*_{n}=(\gamma^*(1,n),\ldots, \gamma^*(n,n))$ with
\[
 \gamma^*(i,n)=n+1-\gamma(i,n) \qquad  i=1,\ldots,n
\]

Notice that if  $I_n=(I(1,n),\ldots, I(n,n))$ is  the identity permutation then $I^*_n \circ I^*_n= I_n$.
\end{defi}

\begin{defi}\label{def3}  Let   $\gamma_{n}$ be a permutation of $n$ elements. We define the conjugated permutation of  $\gamma_{n}$, denoted by      $\overline {\gamma}_{n}$,  as the permutation given by $\overline {\gamma}_{n}=(\overline{\gamma}(1,n), \ldots, \overline{\gamma}(n,n))$ with
\[
  \overline{\gamma}(i,n)=n+1- \gamma(n+1-i,n) \qquad  i=1,\ldots,n
\]
\end{defi}

\section{Theorem of periodic orbit decomposition}

 In order to obtain the decomposition theorem below, we need to revisit the composition process and state it in terms of next visiting permutations.

 Let $O_h, O_s$ be  supercycles of a $C^2$-unimodal map $f$ with next visiting permutations $\beta_{h}$ and $\beta_{s}$ respectively. The geometric meaning of composing $O_h$ and $O_s$  involves replacing the  $h$ points of $O_h$ by $h$ boxes, with $s$ points each,  such that all points of a same box  are mapped into the same box.
It is important to point out that boxes are visited  consecutively according to $\beta_h$  and that every time the same box is visited then the box points are visited according to $\beta_s$, if $f^h$ has a maximum and according to $\overline{\beta}_s$ if  $f^h$ has a minimum (see \cite{Mar1} for more details). As boxes (see  definition \ref{d32}) $H_i$ with $i=1,\dots,h$, are visited according to $\beta_h$ we split the visit in two parts:

  \begin{equation}\label{s1}
  H_c\longrightarrow H_1
   \end{equation}

\begin{equation}\label{s2} H_1\longrightarrow\ldots H_i\ldots \longrightarrow H_c
  \end{equation}

In sequence $(\ref{s2})$, excluding $H_c$,  boxes are located in $\mathrm{J_R}$ or  $\mathrm{J_L}$. Every time the orbit leaves a box located  in  $\mathrm{J_L}$ the  points in that box are mapped according to  the identity permutation, $I_s$, because $f$ is increasing in  $\mathrm{J_L}$. On the contrary, every time the orbit leaves a box located  in  $\mathrm{J_R}$, the  points in that box are mapped reverted from left to right because $f$ is decreasing in  $\mathrm{J_R}$, that is, they are linked by $I^*_s$.  As $I^*_s\circ I^*_s=I_s$ it results  that $H_1$ is linked with $H_c$ by $I_s$ or $I^*_s$.
It only remains to know the link between $H_c$ and $H_1$ of sequence $(\ref{s1})$ to close the orbit  (see figure \ref{figura5})

This is the geometrical mechanism underlying the proof of the following lemma. This lemma is essential to prove the  theorem \ref{T2} which is the goal of this paper.

\begin{lem}\label{L1}  Let $O_h$, $O_s$ be two supercycles of a $C^2$-unimodal map $f$ with next visiting permutations $\beta_{h}$ and $\beta_{s}$ respectively. Let $c$ be such that $\beta(c,h)=1$. If $O_{hs}$ is the supercycle resulting of composing $O_h$ with $O_s$ then its next visiting  permutation
$$\beta_{hs}=(\beta^1(1,hs), \ldots, \beta^1(s,hs), \beta^2(1,hs), \ldots,  \beta^2(s,hs), \ldots, \beta^h(1,hs), \ldots, \beta^h(s,hs))$$
is given, for all $k=1, \ldots,s$,  by
\begin{itemize}
  \item [a)]\, if $i=1,\ldots, c-1$
\[\beta^i{(k,hs)}=  \beta(i,h)s -(k-1)\]
 \item [b)]\, if $i=c$
 \[ \beta^c {(k,hs)}=   \left\{                   \begin{array}{ll}
                     \beta{(k,s)}  & \hbox{if $i=c$ is odd} \\
                      \beta {(s+1-k,s)} & \hbox{if $i=c$ is even}
                    \end{array}\right.
\]      \item [c)]\,  if  $i=c+1,\ldots,h$
       \[\beta^i{(k,hs)}= (\beta(i,h)-1)s +k \]
       \end{itemize}
\end{lem}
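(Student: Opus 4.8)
The plan is to turn the geometric composition recalled just above (and treated in detail in \cite{Mar1}) into pure ordinal bookkeeping and then read off the three formulas one case at a time. First I would record the structural facts that the composition provides. In $O_{hs}$, obtained by composing $O_h$ with $O_s$, each point of $O_h$ is blown up into a box of $s$ points placed where that point sat; these boxes are pairwise disjoint and keep the left--right order of the points of $O_h$, so the $s$ points lying in the box that replaces the ordinal-$i$ point of $O_h$ are exactly the points of ordinals $(i-1)s+1,\dots,is$ in $O_{hs}$, that is, the box $H_i$ of Definition \ref{d32}. Moreover one application of $f$ sends that box onto the box replacing the ordinal-$\beta(i,h)$ point of $O_h$, i.e. $\beta_q(H_i)=H_{\beta(i,h)}$, and in particular $H_c\to H_1$. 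Recalling also that, by the convention of \cite{Mar} and the Remark after Definition \ref{d1}, smaller ordinal means further right (so the right-most orbit point $f(C)$ has ordinal $1$), the whole problem reduces to describing, for each $i$, how one step of $f$ matches the $s$ local positions inside $H_i$ with the $s$ local positions inside $H_{\beta(i,h)}$, and then passing to global ordinals via $H_j\ni(\text{local }k)\ \leftrightarrow\ (\text{global }(j-1)s+k)$.

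For the cases $i\neq c$ this is immediate. The box $H_i$ does not contain $C$; its representative point of $O_h$ has ordinal $<c$ when $i<c$, hence lies to the right of $C$, so $H_i\subset \mathrm{J_R}$, while $i>c$ forces $H_i\subset \mathrm{J_L}$. Thus $f$ is strictly monotone on $H_i$. If $i>c$ it is increasing, it preserves the inner order, local position $k$ goes to local position $k$, i.e. to global ordinal $(\beta(i,h)-1)s+k$, which is (c). If $i<c$ it is decreasing, it reverses the inner order, local position $k$ goes to local position $s+1-k$, i.e. to global ordinal $(\beta(i,h)-1)s+(s+1-k)=\beta(i,h)s-(k-1)$, which is (a).

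Case (b) is the core of the proof and the step I expect to cost the most work, because the composition law gives the action of the \emph{return} map $f^{h}$ on the critical box $H_c$, namely $\beta_s$ if $f^{h}$ has a maximum at $C$ and $\overline{\beta}_s$ if it has a minimum, whereas $\beta^{c}(\cdot,hs)$ encodes the action of a \emph{single} $f$. I would first settle the max/min alternative: writing $f^{h}=f^{\,h-1}\circ f$ near $C$ and using that $f$ has a maximum at $C$, the type of extremum of $f^{h}$ at $C$ is the sign of $(f^{\,h-1})'$ at $f(C)$, which by the chain rule is $(-1)^{m}$, where $m$ is the number of points among $f(C),f^{2}(C),\dots,f^{h-1}(C)$ lying in $\mathrm{J_R}$; those are exactly the orbit points of ordinals $1,\dots,c-1$, so $m=c-1$ and $f^{h}$ has a maximum at $C$ iff $c$ is odd. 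Then I would peel off the return through $H_1$: since $f(H_c)=H_1$ and $f^{\,h-1}(H_1)=H_c$ one has $f^{h}|_{H_c}=\big(f^{\,h-1}|_{H_1\to H_c}\big)\circ\big(f|_{H_c\to H_1}\big)$, and $f^{\,h-1}|_{H_1\to H_c}$ is $I_s$ or $I^{*}_s$ according to the very same sign $(-1)^{c-1}$. Inverting, $f|_{H_c\to H_1}=\beta_s$ when $c$ is odd, and $f|_{H_c\to H_1}=I^{*}_s\circ\overline{\beta}_s$ when $c$ is even; for the latter Definition \ref{def3} gives $\big(I^{*}_s\circ\overline{\beta}_s\big)(k)=s+1-\overline{\beta}(k,s)=\beta(s+1-k,s)$. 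Since the global ordinals inside $H_1$ are just $1,\dots,s$, local and global positions there coincide, and this produces exactly $\beta^{c}(k,hs)=\beta(k,s)$ for $c$ odd and $\beta^{c}(k,hs)=\beta(s+1-k,s)$ for $c$ even, which finishes the argument.
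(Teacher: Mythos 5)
Your proof is correct and follows essentially the same route as the paper's: treat the non-critical boxes by monotonicity of $f$ on $\mathrm{J_R}$ and $\mathrm{J_L}$, and determine the central-box link $\gamma_s$ by splitting the return map as $f^{h}|_{H_c}=\bigl(f^{h-1}|_{H_1\to H_c}\bigr)\circ\bigl(f|_{H_c\to H_1}\bigr)$ and solving $I_s\circ\gamma_s=\beta_s$ (resp. $I^*_s\circ\gamma_s=\overline{\beta}_s$), with your final identity $\gamma_s(k)=\beta(s+1-k,s)$ agreeing with the paper's $\gamma_s=\beta_s\circ I^*_s$. The only difference is cosmetic: you re-derive via the chain rule the parity fact (that $f^{h}$ has a maximum at $C$ iff $c$ is odd, and that the $H_1\to H_c$ link carries the same sign $(-1)^{c-1}$) which the paper simply cites from lemma 3 of \cite{Mar1}.
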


\begin{proof}
	
As $i$-th box is  preceded by  $(i-1)$ boxes with $s$ elements  each, the elements of  $i$-th box are given by
$(i-1)s+k,\;  k=1, \ldots,s$.

As $i$-th  box is mapped into  $\beta(i,h)$-th box, it results that the $s$ elements of the $i$-th  box are mapped into   the $s$ elements of the $\beta(i,h)$-th box. In order to know the images of the $s$ elements in the $i$-th  box we have to consider where the  $i$-th  box is located.

\begin{enumerate}

\item [a)] \, 	 $i$-th  box  located in   $\mathrm{J_R}$, that is,  $i=1,\ldots,c-1$.

As $f$ is strictly decreasing in  $\mathrm{J_R}$ the order of the elements in $i-$th  box, are reverted from left to right after mapping into $\beta(i,h)-$th box, that is,

$$(i-1)s+k  \longrightarrow \beta(i,h) s - (k-1)  \ \text{ with } \, k=1,\ldots, s$$

so $ \beta^i{(k,hs)}= \beta(i,h)s -(k-1) \ \text{ with } \, k=1,\ldots, s$ if  $i=1,\ldots,c-1$.

  \item [b)]	\,  $i-$th  box  located in  $\mathrm{J_L}$, that is $i=c+1,\ldots,h$.

As $f$ is strictly increasing in  $\mathrm{J_L}$ the elements of $i-$th  box, are mapped into the elements of   $\beta(i,h)-$th box conserving  their relative order, that is

$$(i-1)s+k  \longrightarrow (\beta(i,h) - 1)s+k  \ \text{ with } \quad  k=1,\ldots, s$$

so $\beta^i{(k,hs)}= (\beta(i,h)-1)s +k  \ \text{ with } \, k=1,\ldots, s$ if $i=c+1,\ldots,h.$

	 \item [c)]\,  The  $i-$th  box  is $H_c$, the so-called central box. The proof splits  into two steps:

\begin{itemize}
  \item [c.1)]   $c$ is odd. As $C$ is odd the number of points of $O_h$ located in $\mathrm{J_R}$ is even (in  \cite{Mar1} this is said as the $\mathrm{R}$-parity of $\mathrm{I_1}, \ldots, \mathrm{I_{h-1}}$ is even (definition $2$  \cite{Mar1}), so  $f^h$ has a maximum (lemma $3$ \cite{Mar1}) and  then the link of a point of the central box with the next visiting point in this same box is given by $\beta_{s}$, as we have just explained above. But the linking of these two points requires visiting all boxes before they connect between themselves. Therefore   as the number of  $O_h$  located in $\mathrm{J_R}$ is even if we set off $H_{1}$ to reach $H_{c}$ we will have visited an even number of boxes located in $\mathrm{J_R}$. Given that images of points located in $\mathrm{J_R}$, where $f$ is decreasing, are reverted from left to right and two reversion are equivalent to an identity, it results that the $s$ elements of $H_{1}$ are linked with the $s$ elements of $H_{c}$ by the identity permutation $I_s$.  So, we have to connect the central box with the first one by  an unknown permutation, $\gamma_{s}$, such that $I_s\circ \gamma_s=\beta_s$. Then  $\gamma_s=\beta_s$.

 So  the elements  of the central box, given by $(c-1)s+k  \qquad   k=1, \ldots,s$, are mapped into the elements of first   box by
 $\beta_s$,  that is,

$$(c-1)s+k  \longrightarrow \beta(k,s) \ \text{ with } \, k=1,\ldots, s$$
 so $\beta^c{(k,hs)}= \beta{(k,s)} \ \text{ with } \, k=1,\ldots, s$ if $c$ is odd.

   \item [c.2)]  $c$ is even. By a similar argument to the one given above, the elements of $H_{1}$ and $H_{c}$ are linked by $I^*_{s}$ given that there is an odd number of reversions. Furthermore, the link of a point of the central box with the next visiting point in this same box is given by $\overline{\beta}_s$ because $f^h$ has a  minimum \cite{Mar1},  as we have just explained above. So, we have to connect the central box with the first one by  an unknown  permutation, $\gamma_{s}$, such that, $I^*_{s}\circ \gamma_s=\overline{\beta}_s$. Then $I^*_{s}\circ I^*_{s}\circ \gamma_s=I^*_{s}\circ \overline{\beta}_s.$ Since $\overline{\beta_s}=I^*_s \circ \beta_s \circ I^*_s$  we have $ \gamma_s=\beta_{s}\circ I^*_{s}.$

       So $\beta^c{(k,hs)}= \beta (s+1-k,s) \ \text{ with } \, k=1,\ldots, s$ if $c$ is even.

\end{itemize}

  \end{enumerate}
 \end{proof}

\begin{Remark}\label{R1} Notice that, under conditions of lemma \ref{L1}, if $O_{hs}$ is the composed supercycle of $O_h$ with $O_s$, when $\beta(c,h)=1$ with $c$ even, its next visiting  permutation $\beta_{hs}=\left(  \beta(j,hs)   \right)$  is given by
\[\left(
  \begin{array}{c|c|c}\underbrace{\begin{array}{c}    (i-1)s+k\\ \\ \beta(i,h)s-(k-1)  \end{array}}_{\substack{k=1, \ldots, s\; \forall i \\ i=1,\ldots,c-1 }} & \underbrace{\begin{array}{ccc} (c-1)s+k \\ \\ \beta(s+1-k,s) \end{array}}_{\substack{k=1, \ldots, s \\ i=c}} &  \underbrace{\begin{array}{c} (i-1)s+k\\  \\(\beta(i,h)-1)s+k\end{array}}_{\substack{k=1, \ldots, s\; \forall i \\ i=c+1,\ldots,h}}\end{array}
\right)\]
Notice also, that if $i<c$, then $\beta^i(r+1,{hs})=\beta^i(r,{hs})-1$ for all  $r=1, \ldots, s-1$, whereas if $i>c$ then $\beta^i(r+1,{hs})=\beta^i(r,{hs})+1 $ for all  $r=1, \ldots, s-1$, and that $\{\beta^c(r,{hs})\}_{r=1, \ldots, s}\equiv \{1,\ldots,s\}$.
\end{Remark}

Our next step is to determine necessary and sufficient conditions in order to know whether a periodic orbit is compound or not. Below, an algorithm will be given to break-up periodic orbits into their constituent elements.

\begin{Remark}     $\left[ \; \cdot \; \right]$ means   integer part of a real number.
\end{Remark}

\begin{thm}\label{T2}Let   $O_{q}$  be a supercycle of a   $C^2-$unimodal map $f$ with   the  next visiting  permutation $\beta_{q}=(\beta(1,q), \ldots, \beta(q,q))$, and $\beta(z,q)=1.$  Let $h,s\in \N$ be  such that $q=hs$.
 $O_q$ is the  composition of two supercycles $\text{O}_{h}$ and $\text{O}_{s}$ if only if $\beta_q$ is given,  for all $k=1, \ldots, s$, by
\begin{itemize}
 \item [a)] \,  if \, $i=1, \ldots, [\frac{z}{s}]$
  \[ \beta^i(k,q)=
  \beta^i(1,q)-(k-1)
\]
  \item [b)] \,  if \, $i=[\frac{z}{s}]+2, \ldots, h$
  \[ \beta^i(k,q)=
  \beta^i(1,q)+(k-1)
\]
  \item [c)] \,  if \, $i=[\frac{z}{s}]+1$
\[   \beta^i{(k,q)}=   \left\{                   \begin{array}{ll}
                     \beta{(k,s)}  & \hbox{if $i=[\frac{z}{s}]+1$ is odd} \\
                      \beta {(s+1-k,s)} & \hbox{if $i=[\frac{z}{s}]+1$ is even}
                    \end{array}\right.
\]
 \end{itemize}
where $\beta (k,s)$ is the $k-$th element of a next visiting  permutation, $\beta_s,$  of an orbit with period $s$.
\end{thm}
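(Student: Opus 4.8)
The statement is an ``if and only if'', so the plan splits into two directions. The forward direction (necessity) is essentially Lemma~\ref{L1} together with a bookkeeping observation: if $O_q$ is the composition of $O_h$ with $O_s$, then $\beta(c,h)=1$ for some $c$, and the central box $H_c$ is precisely the block of $s$ consecutive indices starting at $(c-1)s+1$. So I must first show that $z$, the position with $\beta(z,q)=1$, lies in that block, i.e. that $c-1=[\frac{z}{s}]$. This follows because $\beta(c,h)=1$ means, by case (b) of Lemma~\ref{L1} (the $c$ even subcase) or case (c.1) (the $c$ odd subcase), that the image-values of the central box are exactly $\{1,\dots,s\}$ (see Remark~\ref{R1}), so the unique index mapping to $1$ sits among $\{(c-1)s+1,\dots,(c-1)s+s\}$, giving $[\frac{z}{s}]=c-1$ when $s\nmid z$, and one checks the boundary case $z=(c-1)s+s$ separately. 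Once the identification $[\frac{z}{s}]+1=c$ is in place, the three displayed formulas (a), (b), (c) of the theorem are literally the three cases (a), (c), (b) of Lemma~\ref{L1} rewritten: (a) of the lemma says $\beta^i(k,q)=\beta(i,h)s-(k-1)$, and since $\beta^i(1,q)=\beta(i,h)s$ this is $\beta^i(1,q)-(k-1)$; similarly case (c) of the lemma gives (b) of the theorem; and case (b) of the lemma is verbatim (c) of the theorem.

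The reverse direction (sufficiency) is the substantive part. Here I am given a permutation $\beta_q$ satisfying (a), (b), (c), and I must construct $O_h$ and $O_s$ and verify that their composition is $O_q$. The construction of $O_s$ is forced: by (c), the restriction of $\beta_q$ to the central block $H_{[z/s]+1}$, read off appropriately, is $\beta_s$ (or its conjugate-via-$I_s^*$, depending on parity), and one recovers $\beta_s$ by inverting Definition~\ref{def2}/\ref{def3}. For $O_h$, I would define $\beta(i,h)$ by collapsing each box to a point: conditions (a) and (b) say each non-central block is an arithmetic progression of step $\mp1$, hence its image is an interval of length $s$; define $\beta(i,h)$ to be the index (from $1$ to $h$) of that interval among the $h$ blocks $\{(j-1)s+1,\dots,js\}$. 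One then has to check (i) that $\beta_h$ so defined is a genuine next visiting permutation of \emph{some} supercycle $O_h$, (ii) that $\beta(c,h)=1$ with $c=[\frac{z}{s}]+1$, and (iii) that composing this $O_h$ with the recovered $O_s$ reproduces $\beta_q$ exactly — but (iii) is immediate from Lemma~\ref{L1} once (i) and (ii) hold, because the lemma's output formulas coincide with the hypotheses (a),(b),(c).

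The main obstacle is step (i): showing that the collapsed data $\beta_h$ is realizable as the next visiting permutation of an actual unimodal supercycle, rather than just an abstract permutation. This requires knowing which permutations of $\{1,\dots,h\}$ arise this way — presumably a characterization from \cite{Mar} or \cite{Mar1} (admissibility / a shift-maximality type condition on the associated symbolic sequence). I would argue that the arithmetic-progression structure of the blocks, inherited from (a) and (b), forces the parity/ordering constraints that $\beta_h$ must satisfy: the blocks in ``$\mathrm{J_R}$'' (those with decreasing step, $i\le[\frac{z}{s}]$) and in ``$\mathrm{J_L}$'' (increasing step, $i\ge[\frac{z}{s}]+2$) partition consistently, the $\mathrm{R}$-parity condition needed for $f^h$ to have the correct extremum at the central box matches the parity of $[\frac{z}{s}]+1$ appearing in case (c), and the interval-images nest correctly because $\beta_q$ is a bijection. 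Assembling these into a verification that $\beta_h$ meets the admissibility criterion of \cite{Mar} is where the real work lies; the rest is substitution into Lemma~\ref{L1}.
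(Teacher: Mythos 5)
Your plan follows the same architecture as the paper: necessity is Lemma~\ref{L1} rewritten after locating the central block, and sufficiency reads $\beta_s$ off the central block, collapses the remaining blocks to define $\beta_h$, and re-applies Lemma~\ref{L1}. The first genuine gap is in the collapsing step. You define $\beta(i,h)$ as ``the index of that interval among the $h$ blocks,'' which silently assumes that the image of each non-central block $H_i$ is not merely a run of $s$ consecutive integers but is \emph{aligned} with the partition, i.e.\ equals some $H_j$; your only justification, ``the interval-images nest correctly because $\beta_q$ is a bijection,'' does not prove this, since a priori such an interval could straddle two consecutive blocks. The paper devotes its Appendix (Theorem~\ref{ta}) to exactly this point: a counting/contradiction argument showing that if one image interval were misaligned, then---because the central block's image is $\{1,\dots,s\}=H_1$ by condition (c) and all other images are length-$s$ intervals---the leftover elements of $\{1,\dots,hs\}$ could not be exhausted by length-$s$ intervals. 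Without this fact $\beta_h$ is not even well defined, and the same fact is what the paper uses to show $\{\beta(i,h):i=1,\dots,h\}=\{1,\dots,h\}$ (including that the divisions by $s$ in its explicit formula give integers). Your proposal passes over this as if it were bookkeeping.

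The second gap is the one you yourself flag: admissibility of $O_h$ as an actual supercycle. You call it ``where the real work lies'' and propose to verify an unspecified admissibility criterion from \cite{Mar}, \cite{Mar1} via parity and ordering constraints, but no argument is assembled, so the sufficiency direction remains open in your write-up. The paper disposes of it quite differently and much more cheaply: by construction the symbolic sequence of $O_h$ coincides with the first $h$ symbols of the symbolic sequence of the admissible orbit $O_q$, so by the shift operator and kneading theory inadmissibility of $O_h$ would force inadmissibility of $O_q$, a contradiction. In short, where you locate the hard part the paper uses a one-line kneading argument, and the actual hard part in the paper is the block-alignment theorem your sketch assumes. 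A minor point in your favour: in the necessity direction you notice the boundary case $s\mid z$, which the paper ignores; note, however, that ``checking it separately'' is not a formality, since when the critical point is the last element of its block one has $\left[\frac{z}{s}\right]+1=c+1\neq c$, so the identification of the central block with index $\left[\frac{z}{s}\right]+1$ (used verbatim in the paper's proof) needs an extra hypothesis or a reformulation there.
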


 \begin{proof} ~
 
 $\Longrightarrow\,)$ Let  $O_q$ be the composition of two supercycles $\text{O}_{h}$ and $\text{O}_{s}$. Let $\beta_h$ and $\beta_s$ be the next visiting permutations of $O_h$ and $O_s$, respectively. As $\beta(z,q)=1$ then $\beta(\left[\frac{z}{s}\right]+1, h)=1$. If $i \neq \left[\frac{z}{s}\right]+1$, by lemma \ref{L1}, we have
  \begin{equation}\label{te2}
   \beta^i(k,q)=\left\{
    \begin{array}{ll}
      \beta(i,h)s -(k-1) & \hbox{if $i=1,\ldots, [\frac{z}{s}]$} \\
      (\beta(i,h)-1)s +k & \hbox{if  $i=[\frac{z}{s}]+2,\ldots,h$}
    \end{array}
  \right.
  \end{equation}
  It follows from (\ref{te2})
     \begin{equation}\label{te3}\beta^i(1,q)=\left\{ \begin{array}{ll}
      \beta(i,h)s  & \hbox{if $i=1,\ldots, [\frac{z}{s}]$} \\
      (\beta(i,h)-1)s +1 & \hbox{if  $i=[\frac{z}{s}]+2,\ldots,h$}
    \end{array}
    \right.  \end{equation}
after substituting (\ref{te3}) in equation (\ref{te2}), we get for $i\neq [\frac{z}{s}]+1$:
\begin{equation}\label{te4}
    \beta^i(k,q)=\left\{ \begin{array}{ll}
      \beta^i(1,q) -(k-1) & \hbox{if $i=1,\ldots, [\frac{z}{s}]$} \\
      \beta^i(1,q)+ (k-1)& \hbox{if  $i=[\frac{z}{s}]+2,\ldots,h$}
    \end{array}\right.
\end{equation}
   The case $i= [\frac{z}{s}]+1$, follows directly from b) in lemma \ref{L1}.

 $\Longleftarrow\, )$ We assume that $\beta_q$ satisfies conditions a), b), c) of theorem \ref{T2} and want to proof that $O_q$ is the composition of two supercycles $O_h$ and $O_s$. For this we will build up two next visiting  permutations $\beta_s$ and $\beta_h$ whose composition is $\beta_q$.

 We define  $\beta_{s}=(\beta(1,s), \ldots, \beta(s,s))$ where

      \begin{equation}\label{te6} \beta(k,s)=\left\{
  \begin{array}{ll}
  \beta^{[\frac{z}{s}]+1}(k,q)& \hbox{if \, $[\frac{z}{s}]+1$ is odd} \\
&\\
  \beta^{[\frac{z}{s}]+1}(s+1-k,q) & \hbox{if  \, $[\frac{z}{s}]+1$ is even}
  \end{array}
\right.\end{equation} As $\beta_q$ verifies condition $c)$ in theorem \ref{T2} it results that (\ref{te6}) is  a  next visiting permutations of a $s-$periodic orbit $O_s$.

 Now we define $\beta_{h}=(\beta(1,h), \ldots, \beta(h,h))$ with
      \begin{equation}\label{te7} \beta(i,h)=\left\{
  \begin{array}{cl}
  \dfrac{\beta((i-1)s+1,q)}{s} & \hbox{ $i=1, \ldots, [\frac{z}{s}]$} \\
&\\ 1  & \hbox{ $i= [\frac{z}{s}]+1$} \\ & \\
  \dfrac{\beta((i-1)s+1,q)+(s-1) }{s} & \hbox{$i=[\frac{z}{s}]+2, \ldots, h$}
  \end{array}
\right.\end{equation}
In order to prove that $\beta_h$ is a next visiting permutation, one of the
things we have to prove is that the set $\{ \beta(i,h): \, i=1, \ldots, h\}$
coincides with the set $\{1, \ldots, h\}$. Let us study the different values of
$i$ in (\ref{te7}).
 \begin{itemize}
 \item Let $i=1, \ldots, [\frac{z}{s}]$. According to (\ref{te7}), it results
 \begin{equation}\label{te8} \beta(i,h)=\dfrac{\beta((i-1)s+1,q)}{s}\end{equation}
  Given that for every $i=1, \dots, h$ there exists only one $j \in \{1, \dots, h\}$
  such that $\beta_q(H_i)=H_j$ (see appendix), it results that
  \begin{equation}\label{te7a}
   \beta((i-1)s+1,q)=(j-1)s+r,\, r=1,\ldots, s
  \end{equation}
  Taking into account (\ref{te8}) and (\ref{te7a}), in order to prove that
  $\beta(i,h)$ is a natural number let us see that $\beta((i-1)s+1,q)=(j-1)s+s$.
  Let us assume it were false, that is,
   \begin{equation}\label{te9}\beta((i-1)s+1,q)=(j-1)s+r \ \text{ for some } \ r=1,\ldots, s-1\end{equation}
   Applying condition $a)$ for $k=s$, and taking into account definition \ref{d3}, it yields
 \begin{equation}\label{te10}\beta((i-1)s+s,q)=\beta((i-1)s+1,q)-(s-1).\end{equation}
  Then from equation (\ref{te9}) and  equation (\ref{te10}), it results
 \begin{equation}\label{te10a}\beta((i-1)s+s,q)=(j-1)s+(r+1-s) \ \text{ for some } r=1,\ldots, s-1\end{equation} from (\ref{te10a}), given that  $r+1-s\leq 0$,  $\beta((i-1)s+s,q)\notin H_j$ which is in contradiction with $\beta_q(H_i)=H_j$ (see appendix). So  $\beta((i-1)s+1,q)=(j-1)s+s$  and replacing it in (\ref{te8}), we obtain
   \begin{equation}\label{te7b}
   \beta(i,h)=\dfrac{\beta((i-1)s+1,q)}{s}=j\in \{1, \ldots, h\}
   \end{equation}
  According to c) of this theorem, it holds $\beta_q(H_{\left[\frac{z}{s}\right]+1})=H_1$. Given that $i \le \left[\frac{z}{s}\right]$ it results that $j \neq 1$ in (\ref{te7b}). Hence, $j\in \{2, \ldots, h\}$.

  \item Let $i=\left[\frac{z}{s}\right]+2, \ldots, h$. Taking into account definition \ref{d3} and condition b) of this theorem, it results from (\ref{te7}) that
   \begin{equation}\label{te10b}\ \beta(i,h)=\dfrac{\beta((i-1)s+s,q)}{s}\end{equation}
  As  $ \beta((i-1)s+s,q)=(j-1)s+s$ (proof is similar to the case $i\leq [\dfrac{z}{s}]$) it results, from (\ref{te10b}), that for every $i\geq [\dfrac{z}{s}]+2$ there exists only one $j\in \{2, \ldots, h\}$ such that $\beta(i,h)=j$. Furthermore, these  $j\in \{2, \ldots, h\}$ are different from those obtained for the case $i\leq \left[\frac{z}{s}\right]$ (because for every $i=1,\dots,h$ there exists only one $j\in \{1, \ldots, h\}$, such that $\beta_q(H_i)=H_j$, see appendix).
  
  \item Let $i=\left[\frac{z}{s}\right]+1$. According to c) of this theorem, $\beta_q(H_{\left[\frac{z}{s}\right]+1})=H_1$, that is, $j=1$.

\end{itemize}

Consequently, the set $\{\beta(i,h):  i=1, \ldots, h\}$ coincides with the set $\{1, \ldots, h\}$.

Our final goal is to prove that $O_q$ is the composition of $O_h$ and $O_s$, that is, $O_q\equiv O_{hs}$.

We denote by  $O_h$  the $h-$periodic orbit whose next visiting permutation is given by  $\beta_h$ (see eq. \ref{te7}). We denote by $O_s$ the orbit of period $s$, whose next visiting permutation is given by $\beta_s$ (see eq. \ref{te6}).

 According to lemma  \ref{L1} for $i\neq [\frac{z}{s}]+1$ it holds
   \begin{equation}\label{e8} \beta^i{(k,hs)}=\left\{
    \begin{array}{ll}
      \beta(i,h)s -(k-1) & \hbox{$i=1,\ldots, [\frac{z}{s}]$} \\
      (\beta(i,h)-1)s +k & \hbox{ $i=[\frac{z}{s}]+2,\ldots,h$}
    \end{array}
  \right.
  \end{equation}
  by taking account (\ref{te7}), (\ref{e8}) is rewritten  as
 \begin{equation} \beta^i{(k,hs)}=\left\{
  \begin{array}{ll}
  \beta((i-1)s+1,q)-(k-1) & \hbox{ $i=1, \ldots, [\frac{z}{s}]$} \\
&\\
  \beta((i-1)s+1,q)+ (k-1)   & \hbox{$i=[\frac{z}{s}]+2, \ldots, h$}
  \end{array}
\right.
\end{equation}

According to lemma  \ref{L1} for $i= [\frac{z}{s}]+1$ it holds
 \begin{equation}\label{e9}
 \beta^{[\frac{z}{s}]+1} {(k,hs)} =  \left\{
                    \begin{array}{ll}
                     \beta(k,s) & \hbox{ $i=[\frac{z}{s}]+1$ is odd} \\
                      \beta(s+1-k,s) & \hbox{if $i=[\frac{z}{s}]+1$ is even}
                    \end{array}
                  \right.
                  \end{equation}
         by using (\ref{te6}), (\ref{e9}) is rewritten  as
            \begin{equation}  \beta^{[\frac{z}{s}]+1}{(k,hs)} =\left\{
                    \begin{array}{ll}
                     \beta^{[\frac{z}{s}]+1}{(k,q)}  & \hbox{ $i=[\frac{z}{s}]+1$ is odd} \\
                    \beta^{[\frac{z}{s}]} {(s+1-(s+1-k),q)} & \hbox{ $i=[\frac{z}{s}]+1$ is even}
                    \end{array}
                  \right\}
\end{equation}
so $ \beta^{[\frac{z}{s}]+1}{(k,hs)}=\beta^{[\frac{z}{s}]+1}(k,q).$

By hypothesis of the theorem both $O_{hs}$ and $O_s$ are admissible orbits, it remains to be seen that $O_h$ is also an admissible one.  By construction the $h$ first elements  of the symbolic sequence of  $O_{hs}$ coincide with the symbolic sequence of  $O_{h}$, therefore by using shift operator and the kneading theory if  $O_{h}$ were not an admissible orbit neither  $O_{hs}$ would be  [1,2], that is a contradiction, consequently  $O_{h}$ is an admissible orbit.

Therefore $\beta_{hs}=\beta_q$. As $\beta_s$ and $\beta_h$ are the next visiting permutations of $O_s$ and $O_h$ respectively it yields   that $O_q$ is the composition of $O_h$ and $O_s$.

\end{proof}

 Notice that theorem \ref{T2} not only provides the decomposition of a compound periodic orbit, but also it allows to deduce what orbits are not decomposable because they are not the composition of two orbits (see example \ref{ejem1}, below).
 Given that a compound orbit is associated with a window inside a window (in the bifurcation diagram), a non decomposable orbit is associated with a primary window. This is a direct application of the theorem to distinguish primary windows from windows inside windows.

\begin{ex}\label{ejem1} Let the visiting sequence of the ${15}-$periodic orbit be  given by
\[                            1\rightarrow \,  15\rightarrow \, 8\rightarrow \,  7 \rightarrow \,  9\rightarrow \,    6\rightarrow \,  10\rightarrow \,  5 \rightarrow \,  11\rightarrow \,  4\rightarrow \,   12\rightarrow \,  3 \rightarrow \,   13\rightarrow \,  2\rightarrow \,  14
                         \] so its next visiting permutations is
\begin{equation}\label{ejem-no-descomp}
\beta_{15}=(\begin{array}{ccccccccccccccc}
                            15& 14&  13&  12& 11&  10&  9&7 &6 & 5& 4 & 3 &  2&  1&  8
                           \end{array})
\end{equation}
 By using theorem \ref{T2} we are going to prove that the orbit (\ref{ejem-no-descomp}) is not the composition of an orbit $O_{h=3}$ and an orbit $O_{s=5}$. As $[\frac{z}{s}]+1=[\frac{14}{5}]+1=3$ is odd, if $O_{15}$ were a compound orbit, it would follow from theorem \ref{T2} that $\beta^3(k,15)$ is the $k$-th element of a next visiting permutation $\beta_5$ corresponding to the orbit $O_5$, $k=1,\,2,\,3,\,4,\,5$. However, according to (\ref{ejem-no-descomp}) we have 
 \[ \beta_5=(
 \begin{array}{ccccc}
  \beta^3(1,15)=4 & \beta^3(2,15)=3& \beta^3(3,15)=2 & \beta^3(4,15)=1 & \beta^3(5,15)=8
 \end{array})\]
 that is, $\beta_5=(4 \, 3 \, 2 \, 1 \, 8)$ which is not a period $5$ orbit (because it has an $8$ in it). In the same way, it is proven that it is not the composition of $O_{h=5}$ and an orbit $O_{s=3}$.

\end{ex}
\section{Algorithm}

The following corollary to theorem \ref{T2} provides the decomposition algorithm:

\begin{cor}\label{cor1}
 Let $O_q$  be a supercycle of a   $C^2-$unimodal map $f$ with   the  next visiting  permutation $\beta_q$.  Let  $z$  be such  that $\beta(z,q)=1.$ If $O_q$ is the result of composing two supercycles $O_{h}$ and $O_{s}$ then the next visiting permutations $\beta_h$ and $\beta_s$ are giving by
 \begin{equation*} \beta(k,s)=\left\{
  \begin{array}{ll}
  \beta^{[\frac{z}{s}]+1}(k,q)& \hbox{if \, $[\frac{z}{s}]+1$ is odd} \\
&\\
  \beta^{[\frac{z}{s}]+1}(s+1-k,q) & \hbox{if  \, $[\frac{z}{s}]+1$ is even}
  \end{array}
\right.\end{equation*}
and
 \begin{equation*} \beta(i,h)=\left\{
  \begin{array}{ll}
  \dfrac{\beta((i-1)s+1,q)}{s} & \hbox{ $i=1, \ldots, [\frac{z}{s}]$} \\
&\\ 1  & \hbox{ $i= [\frac{z}{s}]+1$} \\ & \\
  \dfrac{\beta((i-1)s+1,q)+ (s-1)  }{s} & \hbox{$i=[\frac{z}{s}]+2, \ldots, h$}
  \end{array}
\right.\end{equation*}

\end{cor}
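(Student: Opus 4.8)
The plan is to read the two formulas straight off the dictionary of Lemma \ref{L1}: that lemma already expresses every entry of $\beta_{hs}$ in terms of the entries of $\beta_h$ and $\beta_s$, so once $O_q$ is assumed to be a composition the corollary is just the result of solving those relations for $\beta(i,h)$ and $\beta(k,s)$. First I would pin down the central index. Since $\beta(z,q)=1$ and $O_q=O_{hs}$, position $z$ lies in the central box $H_c$, which by the block structure of the $H_j$ carries the label $c=[\frac{z}{s}]+1$, and $\beta_q(H_c)=H_1$ forces $\beta(c,h)=1$; this is exactly the identification already recorded at the start of the proof of Theorem \ref{T2}. With this value of $c$, Lemma \ref{L1} applies verbatim.

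For $\beta_s$ I would use part b) of Lemma \ref{L1}. When $c$ is odd it gives $\beta^c(k,q)=\beta(k,s)$, that is, $\beta(k,s)=\beta^{[\frac{z}{s}]+1}(k,q)$. When $c$ is even it gives $\beta^c(k,q)=\beta(s+1-k,s)$; composing with the involution $k\mapsto s+1-k$ of $\{1,\dots,s\}$ turns this into $\beta(k,s)=\beta^{[\frac{z}{s}]+1}(s+1-k,q)$. These are precisely the two branches of the claimed formula for $\beta(k,s)$, and since $\beta_q$ satisfies condition c) of Theorem \ref{T2} the right-hand side is genuinely a next visiting permutation of a period-$s$ orbit.

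For $\beta_h$ I would evaluate the remaining parts of Lemma \ref{L1} at $k=1$. For $i=1,\dots,c-1=[\frac{z}{s}]$, part a) gives $\beta^i(1,q)=\beta(i,h)\,s$, hence $\beta(i,h)=\beta((i-1)s+1,q)/s$. For $i=c+1,\dots,h$, part c) gives $\beta^i(1,q)=(\beta(i,h)-1)s+1$, hence $\beta(i,h)=\bigl(\beta((i-1)s+1,q)-1\bigr)/s+1=\bigl(\beta((i-1)s+1,q)+s-1\bigr)/s$. For $i=c=[\frac{z}{s}]+1$ we already have $\beta(i,h)=1$. These are the three branches of the claimed formula for $\beta(i,h)$; the fact that each quotient is an integer need not be checked here, as it is guaranteed by $O_q$ being a composition (equivalently, by conditions a), b), c) of Theorem \ref{T2} via its $\Longrightarrow$ direction).

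Finally I would note that the pair produced this way is the only candidate: Lemma \ref{L1} shows that $\beta_{hs}$ is completely determined by $\beta_h$, $\beta_s$ and the parity of $c$, so for a fixed factorization $q=hs$ the decomposition of $O_q$, when it exists, is unique and must coincide with the permutations exhibited above. There is no substantial obstacle in the argument; the only delicate point is the index bookkeeping --- checking that the box containing position $z$ has label $[\frac{z}{s}]+1$ and that the maximum/minimum alternative for $f^h$ is governed by the parity of that same index --- but both facts are already in hand from Lemma \ref{L1} and the discussion preceding it, so the proof reduces to routine verification.
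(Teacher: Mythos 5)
Your argument is correct and follows essentially the same route as the paper: the two formulas are exactly the relations (the analogues of the paper's equations for $\beta_s$ and $\beta_h$ in the proof of Theorem \ref{T2}) obtained by identifying the central box index $c=[\frac{z}{s}]+1$ from $\beta(z,q)=1$ and then solving the entries of Lemma \ref{L1} (part b) for $\beta(k,s)$, and parts a) and c) at $k=1$ for $\beta(i,h)$). Your added remarks on integrality and uniqueness are consistent with what the theorem already guarantees, so nothing further is needed.
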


Contrary to example \ref{ejem1}, the following example will apply theorem \ref{T2} to a decomposable orbit. Notice that the factorization of a natural number is not unique. For instance, a compound $12-$ periodic orbit could be associated with: a $3-$periodic window inside a $4-$periodic one, a $4-$periodic inside a $3-$periodic, a $2-$periodic inside a $6-$periodic,  or a $6-$periodic inside a $2-$periodic one. The  theorem \ref{T2} gives the only admissible decomposition. This theorem will allow us to know what window the orbit is effectively located in. Later, by using the algorithm, we will obtain from what next visiting  permutations the orbit has been composed.

\begin{ex}\label{ejem2}Let  the   visiting sequence of the ${12}-$periodic orbit be  given by
\[                            1\rightarrow \,  12\rightarrow \, 8\rightarrow \,  4 \rightarrow \,  9\rightarrow \,    5 \rightarrow \,  3\rightarrow \,  10\rightarrow \,   6\rightarrow \,  2 \rightarrow \,   11\rightarrow \,  7
                         \]
so its next visiting permutation of $O_{12}$ is
\begin{equation}\label{beta12}
\beta_{12} = \left(
              \begin{array}{cccccccccccc}
                               12 & 11 & 10 & 9 & 3 & 2&1&4&5&6&7&8 \\
              \end{array}
            \right)
\end{equation}
According to this, we have that $z=7$. So  $\beta_{12}$ could be decomposed as $\beta_{2}\circ \beta_{6}$, $\beta_{6}\circ \beta_{2}$, $\beta_{4}\circ \beta_{3}$ or  $\beta_{3}\circ \beta_{4}$.

Firstly, we will use necessary conditions of theorem \ref{T2} to reject the cases $\beta_{2}\circ \beta_{6}$, $\beta_{6}\circ \beta_{2}$ and $\beta_{4}\circ \beta_{3}$ (items 1, 2 and 3 below). Then, we will see that $\beta_{3}\circ \beta_{4}$ satisfies the sufficient conditions of the theorem (item 4 below).

\begin{enumerate}
\item If $h=2$ and $s=6$, then $\left[\frac{z}{s}\right]=\left[\frac{7}{6}\right]=1$.

According to a) of theorem \ref{T2}
\begin{equation}\label{eq15-1}
 \beta^1 (5,12) = \beta^1 (1,12) - 4
\end{equation}
From (\ref{beta12}) it holds that
\[
 \beta^1 (1,12) = 12
\]
consequently (\ref{eq15-1}) is rewritten as
\[
 \beta^1 (5,12) = 8
\]
However, according to (\ref{beta12})
\[
 \beta^1 (5,12) = 3
\]
what is a contradiction.

\item If $h=6$ and $s=2$, then $\left[\frac{z}{s}\right] + 1=\left[\frac{7}{2}\right]+1=4$, which is even.
According to c) of theorem \ref{T2}, we have that $\beta^4 (1,12)$ and $\beta^4 (2,12)$
should be the elements of a next visiting permutation $\beta_s = \beta_2$. However, according to
(\ref{beta12}), $\beta^4 (2,12)=4$ which is a contradiction.

\item If $h=4$ and $s=3$, then it is solved analogously to case 1.

\item In order to see that $\beta_3 \circ \beta_4$ satisfies the sufficient conditions of theorem \ref{T2}
let us see that it satisfies conditions a), b) and c) of theorem \ref{T2}.
\begin{enumerate}
 \item[4a. ] As $\left[\frac{z}{s}\right]=\left[\frac{7}{4}\right]$, it holds that $i=1$
  corresponding to case a) of theorem \ref{T2}. Hence it must hold that
  \[ \beta^i (k,12) \equiv \beta^1 (k,12) = \beta^1 (1,12) - (k-1), \, k=1,\dots , 4 \]
  what is true according to (\ref{beta12}).
 \item[4b. ] As $\left[\frac{z}{s}\right] + 2 = 3$, it holds that $i=3$
  corresponding to case b) of theorem \ref{T2}. Hence it must hold that
  \[ \beta^3 (k,12) = \beta^3 (1,12) + (k-1), \, k=1,\dots , 4 \]
  what is true according to (\ref{beta12}).
 \item[4c. ] As $\left[\frac{z}{s}\right] + 1 = 2$, it holds that $i=2$
  corresponding to case c) of theorem \ref{T2}.
  
  As $i$ is even, it must hold that
  \begin{equation}\label{eq15-2}
   \beta^2 (k,12) = \beta (4+1-k,4), \, k=1,\dots , 4
  \end{equation}
  being $\beta(k,4)$ the $k$-th element of some next visiting permutation $\beta_4$.
  After replacing $4+1-k$ by $k$ in (\ref{eq15-2}), then it can be rewritten as
  \begin{equation}\label{eq15-3}
   \beta^2 (4+1-k,12) = \beta (k,4), \, k=1,\dots , 4
  \end{equation}
  being $\beta(k,4)$ the $k$-th element of some next visiting permutation $\beta_4$.
  By substituting (\ref{beta12}) in (\ref{eq15-3}) it yields that
  \[ \beta^2 (4,12) = 4 \equiv \beta (1,4) \]
  \[ \beta^2 (3,12) = 1 \equiv \beta (2,4) \]
  \[ \beta^2 (2,12) = 2 \equiv \beta (3,4) \]
  \[ \beta^2 (1,12) = 3 \equiv \beta (4,4) \]
  That is $\beta_4 \equiv ( 4 \, 1 \, 2 \, 3 )$ which is an admissible orbit of symbolic sequence CRLL.

\end{enumerate}

\end{enumerate}
\end{ex}

After obtaining $\beta_4$ in the decomposition $\beta_{12} = \beta_3 \circ \beta_4$
 we are going to use corollary \ref{cor1} to obtain $\beta_3$.

As $\left[\frac{z}{s}\right] = \left[\frac{7}{4}\right] = 1$ it results from corollary
 \begin{equation*} \beta(i,h)=\left\{
  \begin{array}{cl}
  \dfrac{\beta((i-1)4+1,12)}{4} & \hbox{ $i=1$} \\
&\\ 1  & \hbox{ $i= 2$} \\ & \\
  \dfrac{\beta((i-1)4+1,12)+ (4-1)  }{4} & \hbox{$i=3$}
  \end{array}
\right.\end{equation*}
and, by using (\ref{beta12}), it finally results in
  \[ \beta (1,3) = 3 \]
  \[ \beta (2,3) = 1 \]
  \[ \beta (3,3) = 2 \]
Therefore, $O_{h=3}$ has the next visiting permutation $(3\, 1\, 2)$ corresponding to
the orbit of symbolic sequence CRL.

Consequently, we conclude that $O_{12}$ is the composition of an orbit $O_{h=3}$ with an orbit $O_{s=4}$.

\section{Conclusion}

If we had a compound $hs$-periodic orbit we could decompose it in two orbits
of periods $h$ and $s$, respectively, according to theorem \ref{T2}. This process
is the opposite to that described in \cite{Mar1} where two orbits with periods $h$
and $s$ were composed to generate an $hs$-periodic orbit. Therefore, theorem
\ref{T2} closes the theoretical frame of composition and decomposition.

Theorem \ref{T2} states the necessary and sufficient conditions for the decomposition in simpler orbits. Meanwhile, corollary \ref{cor1} provides an algorithm for the computation of those orbits.

The decomposition theorems treated in this paper have an immediate application
(through Poincaré section) to those continuous physical systems showing bifurcation
diagrams similar to the one in figure \ref{figura1}.

Two periodic orbits (with $h$ and $s$ points in their respective Poincaré sections) can be composed into another periodic orbit, having $hs$ points in their Poincaré map in accordance with the composition theorem in \cite{Mar1} (or Lemma \ref{L1}). Now the opposite result can be also achieved using Theorem \ref{T2}.

An $s$-periodic orbit inside the $h$-periodic window, must follow a visiting order in its Poincaré map that can be decomposed using decomposition Theorem \ref{T2}: from a known periodic orbit, another two unique orbits can be described. This link between periodic orbits (not only from simpler to more complex, as studied in \cite{Mar1}, but also from complex to simpler orbits, as studied in this paper) imposes strong restrictions to a physical system dependent on one control parameter, whose underlying origin must be studied.

\section{Appendix}

\begin{thm}\label{ta}Let $O_q$ be an supercycle  of a $C^2-$unimodal map $f$ with the next visiting permutation $\beta_{q}=(\beta(1,q)\, \beta(2,q) \,\ldots  \beta(q,q))$. If $\beta_{q}$ is given by
\begin{itemize}
  \item [a)] \,  if \, $i=1, \ldots, [\frac{z}{s}]$
  \[ \beta^i(k,q)=
  \beta^i(1,q)-(k-1)
\]
  \item [b)] \,  if \, $i=[\frac{z}{s}]+2, \ldots, h$
  \[ \beta^i(k,q)=
  \beta^i(1,q)+(k-1)\]
  \item [c)] \,  if \, $i=[\frac{z}{s}]+1$
\[   \beta^i{(k,q)}=   \left\{                   \begin{array}{ll}
                     \beta{(k,s)}  & \hbox{if $i=[\frac{z}{s}]+1$ is odd} \\
                      \beta {(s+1-k,s)} & \hbox{if $i=[\frac{z}{s}]+1$ is even}
                    \end{array}\right.
\]
 \end{itemize}
for all $k=1,\ldots, s$.

\vspace{1em}

 Then for each $i=1, \ldots, h$ there exists only one $j \in \{1,\ldots, h\}$ such that
 \[\beta_q(H_i)=\{\beta ((i-1)s+k,q): \, k=1,\ldots, s\}=\{(j-1)s+r: \, r=1\ldots, s\}= H_j\]
 Furthermore
  \[\bigcup_{i=1}^h \beta_q(H_i)=\bigcup_{j=1}^h H_j=\{1, \ldots, hs \}\]

\end{thm}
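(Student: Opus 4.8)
The plan is to use that $\beta_q$ is a genuine permutation of $\{1,\ldots,q\}$ with $q=hs$, that the boxes $H_1,\ldots,H_h$ partition $\{1,\ldots,hs\}$ into $h$ consecutive blocks of length $s$, and that conditions a), b), c) pin down each image $\beta_q(H_i)$ to be \emph{either} a block-aligned box $H_j$ \emph{or} the first block $\{1,\ldots,s\}$. First I would record the easy global fact: since $\beta_q$ is a bijection of $\{1,\ldots,hs\}$ and $\{H_i\}_{i=1}^{h}$ partitions $\{1,\ldots,hs\}$ into sets of size $s$, the images $\{\beta_q(H_i)\}_{i=1}^{h}$ are pairwise disjoint sets of size $s$ whose union is $\beta_q(\{1,\ldots,hs\})=\{1,\ldots,hs\}=\bigcup_{j=1}^{h}H_j$. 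This already gives the ``furthermore'' clause, and it reduces the theorem to proving that each $\beta_q(H_i)$ is \emph{equal} to some $H_j$; disjointness of the $H_j$ then makes such a $j$ unique and the correspondence $i\mapsto j$ a permutation of $\{1,\ldots,h\}$.

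Next I would treat the central box. Write $c=[z/s]+1$, where $z$ is the index with $\beta(z,q)=1$. By hypothesis c), $\beta_s=(\beta(1,s),\ldots,\beta(s,s))$ is a next visiting permutation of a period-$s$ orbit, hence a permutation of $\{1,\ldots,s\}$; therefore $\beta_q(H_c)=\{\beta(k,s):k=1,\ldots,s\}=\{1,\ldots,s\}=H_1$ when $c$ is odd, and $\beta_q(H_c)=\{\beta(s+1-k,s):k=1,\ldots,s\}=\{1,\ldots,s\}=H_1$ when $c$ is even (the reindexing $k\mapsto s+1-k$ does not change the set). So $j=1$ works for $i=c$.

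For the remaining boxes $i\neq c$, conditions a) and b) give $\beta_q(H_i)=\{\beta^i(1,q)-(k-1):k=1,\ldots,s\}$ if $i<c$ and $\beta_q(H_i)=\{\beta^i(1,q)+(k-1):k=1,\ldots,s\}$ if $i>c$; in both cases $\beta_q(H_i)$ is a set of $s$ \emph{consecutive} integers, contained in $\{1,\ldots,hs\}$ because its elements are values of the permutation $\beta_q$. By the global fact of the first paragraph together with $\beta_q(H_c)=\{1,\ldots,s\}$, these $h-1$ length-$s$ intervals are pairwise disjoint and partition $\{s+1,\ldots,hs\}$. The key combinatorial step is then the elementary lemma: a partition of $\{s+1,\ldots,hs\}$ into $h-1$ intervals of length $s$ must be the block-aligned one $H_2,H_3,\ldots,H_h$. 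I would prove it by induction on $h$: the interval containing the least element $s+1$ cannot contain any smaller integer, since such an integer would lie in $\{1,\ldots,s\}$ and hence outside $\{s+1,\ldots,hs\}$; so that interval is $\{s+1,\ldots,2s\}=H_2$, and deleting it leaves $h-2$ length-$s$ intervals partitioning $\{2s+1,\ldots,hs\}$, to which the induction hypothesis applies. Hence $\{\beta_q(H_i):i\neq c\}=\{H_2,\ldots,H_h\}$ with each box occurring once, so for every $i\neq c$ there is a unique $j\in\{2,\ldots,h\}$ with $\beta_q(H_i)=H_j$. Together with the central box this proves the theorem, and as a byproduct it shows $i\mapsto j$ is exactly the permutation called $\beta_h$ in the proof of Theorem \ref{T2}.

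The main obstacle is mild — it is the tiling lemma above — and the only delicate point is establishing $\beta_q(H_c)=\{1,\ldots,s\}$: this is precisely where the hypothesis in c) that the $\beta(k,s)$ are the entries of a next visiting permutation (equivalently, that they run over all of $\{1,\ldots,s\}$) is indispensable, since a) and b) alone, applied to an abstract permutation with $\beta^c$ left unconstrained, would not force the central image to be the first block.
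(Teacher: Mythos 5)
Your proof is correct, and its skeleton is the same as the paper's: hypothesis c) forces $\beta_q(H_{[\frac{z}{s}]+1})=H_1$ because the entries there are exactly those of a period-$s$ next visiting permutation, and hypotheses a), b) force every other image $\beta_q(H_i)$ to be a run of $s$ consecutive integers; the only real difference is how block-alignment of those runs is then concluded. The paper argues per box and by contradiction: assuming $\beta_q(H_i)$ is a misaligned interval $\{ns+r-(s-1),\dots,ns+r\}$ with $1\le r\le s-1$, it counts the two leftover sets $A$ and $B$ flanking it and observes their cardinalities are not multiples of $s$, so the remaining box images (each a length-$s$ interval confined to $A$ or $B$) cannot exhaust them --- an argument whose bookkeeping is left somewhat informal (and whose stated cardinality of $B$ is in fact off by one, $(h-n)s-r$ rather than $(h-n)s-(r-1)$). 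You instead prove a global tiling lemma by induction --- any partition of $\{s+1,\dots,hs\}$ into $h-1$ intervals of length $s$ is the block-aligned one --- after first extracting the ``furthermore'' clause directly from bijectivity of $\beta_q$. Your route gives a cleaner and fully rigorous finish and delivers uniqueness of $j$ and the union statement for free; the paper's route localizes the contradiction to a single box, which matches more closely the way the appendix result is cited inside the proof of Theorem \ref{T2}, but proves nothing you do not also obtain.
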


\begin{proof} ~

\begin{enumerate}

\item Let $i=\left[ \frac{z}{s} \right] + 1$. From $c)$ it results that $\beta_q(H_i)=\beta_q(H_{\left[ \frac{z}{s} \right] + 1})=H_1$.

\item Let $i \neq \left[ \frac{z}{s} \right] + 1$. The proof is by contradiction. We suppose that $\beta_q(H_i) \neq H_j$, $j=1,\ldots,h$. \\
Let $i < \left[ \frac{z}{s} \right] + 1$ (for $i > \left[ \frac{z}{s} \right] + 1$ the proof is similar). As $\beta_q(H_i) \neq H_j$ and $\beta_q$ maps $s$ consecutive elements to $s$ consecutive elements (see item a) in theorem \ref{T2}), it results
\[ \left( \beta(i-1) s + 1, \, q \right) \neq \dot{s} \]
and
\[ \left( \beta(i-1) s + s, \, q \right) \neq \dot{s} \]
(where $\dot{s}$ denotes a multiple of $s$), consequently
\begin{equation}\label{apend2}
 \left( \beta(i-1) s + 1, \, q \right) = n s + r, \;\; r<s, \; n,r \in \mathbb{N}
\end{equation}
Taking into account (\ref{apend2}), item a) in theorem \ref{T2}, and definition \ref{d3} it results
\begin{equation}\label{apend3}
 \left( \beta(i-1) s + s, \, q \right) = n s + r - (s - 1),  \;\; r<s, \; n,r \in \mathbb{N}
\end{equation} 
As $\beta_q$ takes every value in $\{1,\,2,\,\dots, h s\}$, it results from (\ref{apend2}) and (\ref{apend3}) that 
\[\{1,\ldots, hs\}= A\bigcup \beta_q(H_i) \bigcup B \]
where
\begin{multline*}
     A=\{1,\ldots, ns+r-(s-1)-1\} \\  \beta_q(H_i)=\{ ns+r-(s-1), \ldots ns+r\} \\  B=\{ns+r+1,\ldots, hs\}
\end{multline*}

Notice that the cardinality of the sets $A$ and $B$ are respectively $(n-1)s+r$ and $(h-n)s-(r-1)$. Except for $H_i$ the images of the other boxes will be mapped into $s$ consecutive elements either in $A$ or in $B$ (see a) and b) in theorem \ref{ta}). Consequently the elements of $A$ and $B$ will be exhausted but for $r$ elements in $A$ and $s-(r-1)$ in $B$, therefore, the image of some boxes will not be formed by consecutive elements, which is in contradiction with the definition of $\beta_q$.

\end{enumerate} 

From items 1 and 2 above, it results 
  \[\bigcup_{i=1}^h \beta_q(H_i)=\bigcup_{j=1}^h H_j=\{1, \ldots, hs \}\]
where it has been taken into account that, as $\beta_q$ is a permutation, for every $i=1,\,\ldots,\,h$ there exists only one $j=1,\,\ldots,\,h$ such that $\beta_q(H_i)=H_j$.

\end{proof}


\begin{thebibliography}{99}

\bibitem{Hao}
H. Bai-Lin,
\textit{Elementary Symbolic dynamics},
World Scientific, 1989.

\bibitem{metro}
N. Metropolis, M. L. Stein, and P. R. Stein
On Finite Limit Sets for Transformations on the Unit Interval,
\textit{J. Combinatorial Theory},
\textbf{15}:25--44, 1973.

\bibitem{Mil}
J. Milnor, W. Thurston
On iterated maps of the interval,
in: \textit{Dynamical systems},
Lecture notes in mathematics, vol. 1342. Springer, Berlin, 1988, pp. 465--563.

\bibitem{De}
B. Derrida, A.  Gervois, Y. Pomeau,
Iteration of endomorphisms on the real axes and representation of numbers,
\textit{Annales de l'institut Henri Poincaré (A)},
\textbf{29}:305--356, 1978.

\bibitem{ref_1}
M. J. Feigenbaum,
Quantitative universality for a class of nonlinear transformations,
\textit{J. Stat. Phys},
\textbf{19}:25--52, 1978.

\bibitem{ref_2}
M. J. Feigenbaum,
The universal metric properties for nonlinear transformations,
\textit{J. Stat. Phys},
\textbf{21}:669--706, 1979.

\bibitem{jes}
J. San Martín,
Intermittency cascade,
\textit{Chaos, Solitons \& Fractals},
\textbf{32}:816--831, 2007.

\bibitem{Mar}
J. San Martín, Mª J. Moscoso, A. González Gómez,
The universal cardinal ordering of fixed points,
\textit{Chaos, Solitons \& Fractals},
\textbf{42}:1996--2007, 2009.

\bibitem{Mar1}
J. San Martín, Mª.J. Moscoso, A. González Gómez,
Composition law of cardinal ordering permutations,
\textit{Physica D},
\textbf{239}:1135--1146, 2010.

\bibitem{Au}
D. Auerbach, P. Cvitanovic, J.P. Eckmann, G. Gunarathe and I. Procaccia,
Exploring chaotic motions through periodic orbits,
\textit{Phys. Rev. Lett.},
\textbf{58}:2387--2389, 1987.

\bibitem{Cv}
P. Cvitanovic,
Invariant measurements of strange sets in terms of cycles,
\textit{Phys. Rev. Lett.},
\textbf{61}:2729--2732, 1988.

\bibitem{Caroll}
T. L. Carroll,
Approximating chaotic time series through unstable periodic orbits,
\textit{Phys. Rev. E},
\textbf{59}:1615--1621, 1999.

\bibitem{ott}
E. Ott, C. Grebogi, and J.A. Yorke,
Controlling Chaos,
\textit{Phys. Rev. Lett.},
\textbf{64}:1196--1199 , 1990.

\bibitem{py}
K. Pyragas,
Continuous Control of Chaos by Self-Controlling Feedback,
\textit{Phys. Lett. A},
\textbf{170}:421--427, 1992.

\bibitem{gutz}
J. Xu,  K.W. Chung
\textit{ Chaos in Classical and Quantum Mechanics},
Springer, New York, 1990.

\bibitem{artu}
R. Artuso, E. Aurell and P. Cvitanovic,
Recycling of Strange Sets: I. Cycle Expansions,
\textit{Nonlinearity},
\textbf{3}: 325--359, 1990.


\end{thebibliography}
\end{document}